 \newcommand{\At}{\mathrm{\textrm{At}}}
 \newcommand{\ls}{\mathcal{L}_{\Sigma}}
 \author{Ekaterina Komendantskaya\inst{1} \and John Power\inst{2}}
  \institute{Department of Computing, University of Dundee, UK\thanks{Ekaterina Komendantskaya would like to acknowledge
the support of EPSRC Grant EP/K031864/1.}
\and
Department of Computer Science,
 University of Bath, UK \thanks{John Power would like to acknowledge the support of EPSRC grant EP/K028243/1. No data was generated in the course of
this research.}
 }
 \title{Category theoretic semantics for theorem proving in logic programming: embracing the laxness}
\begin{document}

 \maketitle

 \begin{abstract}
A propositional logic program $P$ may be identified with a $P_fP_f$-coalgebra
on the set of atomic propositions in the program. The corresponding $C(P_fP_f)$-coalgebra,
where $C(P_fP_f)$ is the cofree comonad on $P_fP_f$, describes derivations by resolution.
Using lax semantics, that correspondence may be extended to a class
of first-order logic programs without existential variables. 
The resulting extension captures the proofs by term-matching resolution in logic programming.
Refining the lax approach, we further extend it to arbitrary logic programs.
We also exhibit a refinement of Bonchi and Zanasi's saturation semantics for logic programming that
complements lax semantics.

   \keywordname{ Logic programming, 
 coalgebra, term-matching resolution, coinductive derivation tree,
 Lawvere theories, lax 
     transformations, Kan extensions.}
 \end{abstract}

 \section{Introduction}

 Consider the following two logic programs.

 \begin{example} \label{ex:listnat}
 ListNat (for lists of natural numbers) denotes the logic program\\
 $1.\  \mathtt{nat(0)}  \gets $\\
 $2.\  \mathtt{nat(s(x))} \gets  \mathtt{nat(x)}$\\
$3.\  \mathtt{list(nil)}  \gets $ \\
$4.\  \mathtt{list(cons (x, y))}  \gets  \mathtt{nat(x), list(y)}$\\
 \end{example}

\begin{example} \label{ex:lp}
 GC (for graph connectivity) denotes the logic program\\
 $0.\  \mathtt{connected(x,x)}  \gets $\\
 $1.\  \mathtt{connected(x,y)}  \gets  \mathtt{edge(x,z)}, \mathtt{connected(z,y)}$\\
\end{example}
A critical difference between ListNat and GC is
that in the latter, which is a leading example in Sterling and Shapiro's book~\cite{SS},  
there is a variable $z$ in the tail of the second clause that does not appear in its head. 
The category theoretic consequences of that fact are the central concern of this paper.

It has long been observed, e.g., in~\cite{BM,CLM}, that logic programs induce coalgebras, allowing coalgebraic
modelling of
 their operational semantics. 
  In~\cite{KMP}, we developed the idea for variable-free logic
 programs as follows. Using the definition of a logic program~\cite{Llo}, 
given a set of atoms $At$, one can identify a variable-free logic program
 $P$ built over $At$ with a $P_fP_f$-coalgebra structure
 on $At$, where $P_f$ is the finite powerset functor on $Set$: each atom is the
 head of finitely many clauses in $P$, and the body of each
 clause contains finitely many atoms. Our main result was that if
 $C(P_fP_f)$ is the cofree comonad on $P_fP_f$, then, given a logic
 program $P$ qua $P_fP_f$-coalgebra, the corresponding
 $C(P_fP_f)$-coalgebra structure characterises the and-or
 derivation trees generated by $P$, cf~\cite{GC}.


This result has proved to be
stable, not only having been further developed by us~\cite{KoPS,KSH14,JohannKK15,FK15,FKSP16}, but also forming
the basis for Bonchi and Zanasi's saturation semantics for logic programming (LP)~\cite{BZ,BonchiZ15}. 
In Sections~\ref{sec:backr}, \ref{sec:parallel}, we give an updated
account of the work, with updated definitions, proofs and detailed examples, to start our semantic analysis of derivations and proofs in LP.

In~\cite{KoP}, we extended our analysis from variable-free logic programs to
arbitrary logic programs. Following~\cite{ALM,BM,BMR,KP96},
 given a signature $\Sigma$ of function symbols, we let
 $\mathcal{L}_{\Sigma}$ denote the Lawvere theory generated by
 $\Sigma$, and, given a logic program $P$ with function symbols in $\Sigma$, 
we considered the functor category
 $[\mathcal{L}_{\Sigma}^{op},Set]$, extending the set $At$ of atoms in
 a variable-free logic program to the functor from
 $\mathcal{L}_{\Sigma}^{op}$ to $Set$ sending a natural number $n$ to
 the set $At(n)$ of atomic formulae with at most $n$ variables
 generated by the function symbols in $\Sigma$ and the predicate symbols in $P$. We sought
 to model $P$ by a $[\ls^{op},P_fP_f]$-coalgebra
 $p:At\longrightarrow P_fP_fAt$ that, at $n$, takes an atomic formula
 $A(x_1,\ldots ,x_n)$ with at most $n$ variables, considers all 
 substitutions of clauses in $P$  into clauses with variables among $x_1,\ldots ,x_n$ whose head agrees with $A(x_1,\ldots
 ,x_n)$, and gives the set of sets of atomic formulae in antecedents,
 mimicking the construction for variable-free logic
 programs.
 Unfortunately, that idea was too simple.

 Consider the logic program ListNat, i.e., Example~\ref{ex:listnat}. There
 is a map in $\mathcal{L}_{\Sigma}$ of the form $0\rightarrow 1$ that
 models the nullary function symbol $0$. So, naturality of the map
 $p:At\longrightarrow P_fP_fAt$ in $[\mathcal{L}_{\Sigma}^{op},Set]$
 would yield commutativity of the diagram
\begin{diagram}
At(1) & \rTo^{p_1} & P_fP_fAt(1) \\
\dTo<{At(\mathsf{0})} & & \dTo>{P_fP_fAt(\mathsf{0})} \\
At(0) & \rTo_{p_0} & P_fP_fAt(0)
\end{diagram}
But consider $\mathtt{nat(x)}\in At(1)$: there is no clause of the form $\mathtt{nat(x)}\gets \, $ in
 ListNat, so commutativity of the diagram would imply that there
 cannot be a clause in ListNat of the form $\mathtt{nat(0)}\gets \, $
 either, but in fact there is one.

At that point, proposed resolutions diverged: at CALCO in 2011, we proposed one approach using lax transformations~\cite{KoP}, 
then at CALCO 2013, Bonchi and Zanasi proposed another, using saturation semantics~\cite{BZ}, an example of the positive interaction
generated by CALCO! In fact, as we explain in Section~\ref{sec:sat}, 
the two approaches may be seen as complementary rather than as alternatives. First we shall describe our approach.

We followed the standard category theoretic technique of
relaxing the naturality condition on $p$ to a
 subset condition, e.g., as in~\cite{Ben,BKP,HH,Kelly,KP1}, so that, in general, given a map in  $\mathcal{L}_{\Sigma}$ of the form $f:n \rightarrow m$,
the diagram
\begin{diagram}
At(m) & \rTo^{p_m} & P_fP_fAt(m) \\
\dTo<{At(f)} & & \dTo>{P_fP_fAt(f)} \\
At(n) & \rTo_{p_n} & P_fP_fAt(n)
\end{diagram}
 need not commute, but rather the composite via $P_fP_fAt(m)$ need only
 yield a subset of that via $At(n)$. So, for example,
 $p_1(\mathtt{nat(x)})$ could be the empty set while $p_0(\mathtt{nat(0)})$ could be
 non-empty in the semantics for ListNat as required. We extended $Set$ to $Poset$ in order
to express the laxness, and we adopted established category theoretic research on laxness, notably that of~\cite{Kelly},
in order to prove that a cofree comonad exists and behaves as we wished. 

For a representative class of logic programs, the above semantics 
describes derivations arising from restricting the usual SLD-resolution used in LP to  \emph{term-matching resolution}, cf.~\cite{KoP2,KoPS}. 
As transpired in further studies~\cite{FK15,JohannKK15}, this particular restriction to resolution rule captures the theorem-proving aspect of LP as opposed to 
the problem-solving aspect captured by SLD-resolution with unification. 
We explain this idea in Section~\ref{sec:backr}. Derivation trees arising from proofs by term-matching resolution
were called \emph{coiductive trees} in~\cite{KoP2,KoPS} to mark their connection to the coalgebraic semantics.

Categorical semantics introduced in~\cite{KoP} worked well for ListNat, allowing us to model its coinductive trees, as we show in Section~\ref{sec:recall} (It was not
shown explicitly in~\cite{KoP}).
However, it does not work well for GC, the key difference being that, in ListNat, no variable appears
in a tail of a clause  that does not also appear in its head, i.e., clauses in ListNat contain no \emph{existential} variables.
In contrast, although not expressed in these
terms in~\cite{KoP}, we were unable to model the coinductive trees generated by GC because it is an \emph{existential} program, i.e. program containing clauses with existential variables. 
We worked around the problems in~\cite{KoP}, but only inelegantly.


We give an updated account of~\cite{KoP} in Section~\ref{sec:recall}, going beyond~\cite{KoP} to  explain how coinductive trees
for  logic programs without existential variables are modelled, and explaining the difficulty in modelling coinductive semantics for arbitrary logic
programs.
We then devote Section~\ref{sec:derivation} of the paper to resolution of the difficulty, providing lax semantics for coinductive trees
generated by arbitrary logic progams.

In contrast to this, Bonchi and Zanasi, concerned by the complexity involved
with laxness, proposed the use of saturation, following~\cite{BM}, to provide an alternative category theoretic semantics~\cite{BZ,BonchiZ15}.
Saturation is indeed an established and useful construct, as Bonchi and Zanasi emphasised~\cite{BZ,BonchiZ15}, with a venerable
tradition, and, as they say, laxness requires careful calculation, albeit much less so in the setting of posets than that of categories. 
On the other hand, laxness is
a standard part of category theory, one that has been accepted by computer scientists as the need has arisen, e.g.,
by He Jifeng and Tony Hoare to model data refinement~\cite{HH,HH1,KP1,P}. More fundamentally, saturation can be seen as complementary 
to the use of laxness rather than as an alternative to it, as we shall explain in Section~\ref{sec:sat}. 
This reflects the important connection between the theorem proving and problem solving aspects of proof search in LP, as Section~\ref{sec:backr} further explains.
 So we would suggest that
both approaches are of value, with the interaction between them meriting serious consideration.

Saturation inherently yields a particular kind of compositionality, but one loses the tightness of the relationship between semantic
model and operational behaviour. The latter is illustrated by the finiteness of branching in a coinductive tree, in contrast to
the infinity of possible substitutions, which are inherent in saturation. To the extent that it is possible, we would like to recover operational 
semantics from the semantic model, along the lines of~\cite{PP}, requiring maintenance of intensionality where possible. We regard the
distinction between ListNat and GC as a positive feature of lax semantics, as a goal of semantics is to shed
light on the critical issues of programming, relation of existential programs to theorem-proving in LP being one such~\cite{FK15}. So we regard Section~\ref{sec:sat}
as supporting both lax and saturation semantics, the interaction between them shedding light on logic programming.

\section{Background: theorem proving in LP}\label{sec:backr}

 A \emph{signature} $\Sigma$ consists of a set  $\mathcal{F}$ of function
   symbols $f,g, \ldots$ each equipped with an arity.  Nullary (0-ary) function symbols are
 constants. For any set $\mathit{Var}$ of variables,
 the set $Ter(\Sigma)$ of terms over $\Sigma$ is defined
 inductively as usual:
 \begin{itemize}
 \item $x \in Ter(\Sigma)$ for every $x \in \mathit{Var}$.
 \item If $f$ is an n-ary function symbol ($n\geq 0$) and $t_1,\ldots
   ,t_n \in Ter(\Sigma) $, then $f(t_1,\ldots
   ,t_n) \in Ter(\Sigma)$.  
 \end{itemize}


A \emph{substitution} over $\Sigma$ is a total function $\sigma:
\mathit{Var} \to \mathbf{Term}(\Sigma)$. 
Substitutions are extended from variables
to terms as usual: if $t\in \mathbf{Term}(\Sigma)$ and $\sigma$ is a substitution, then the {\em application}
$\sigma(t)$ is a result of applying $\sigma$ to all variables in $t$.
A substitution $\sigma$ is a \emph{unifier}
for $t, u$ if $\sigma(t) = \sigma(u)$, and is a
\emph{matcher} for $t$ against $u$ if $\sigma(t) = u$. 
A substitution $\sigma$ is a {\em most general unifier} ({\em mgu}) for
$t$ and $u$ if it is a unifier for $t$ and
$u$ and is more general than any other such unifier. A {\em most
  general matcher} ({\em mgm}) $\sigma$ for $t$ against $u$ is defined analogously. 

In line with LP tradition~\cite{Llo}, we also take a set $\mathcal{P}$ of predicate symbols each equipped with an arity.
It is possible to define logic programs over terms only, 
  in line with term-rewriting (TRS) tradition~\cite{Terese}, as we do in~\cite{JohannKK15}, but we will follow the usual LP tradition here. 
This gives us the following inductive definitions of the sets of atomic formulae, Horn clauses and logic programs
(we also include the definition of terms here for convenience):

\begin{definition}\label{df:syntax}

\

Terms $Ter \ ::= \ Var \ | \ \mathcal{F}(Ter,..., Ter)$

   Atomic formulae (or atoms) $At \ ::= \ \mathcal{P}(Ter,...,Ter)$

  (Horn) clauses $HC \ ::= \ At \gets At,..., At$
	
	Logic programs $Prog \ ::= HC, ... , HC$
\end{definition}

In what follows, we will use letters $A,B,C,D$, possibly with subscripts, to refer to elements of $At$.


Given a logic program $P$, we may ask whether a certain atom is logically entailed by $P$. E.g., given the program ListNat we may ask whether 
$\mathtt{list(cons(0,nil))}$ is entailed by ListNat. The following rule, which is a restricted form of the famous SLD-resolution, provides a semi-decision procedure 
to derive the entailment:

\begin{definition}[Term-matching (TM) Resolution]\label{def:resolution}
{\small
\[\begin{array}{c}

\infer[]
    {P \vdash [\ ] }
    { } \ \ \ \ \ \ \ \
  \infer[\text{if}~( A \gets A_1, \ldots, A_n) \in P]
    {P \vdash \sigma A }
    { P \vdash \sigma A_1 \quad \cdots \quad P \vdash \sigma A_n } 
  \end{array}
\]}
\end{definition}

In contrast, the SLD-resolution rule could be presented in the following form: 


$$B_1, \ldots , B_j, \ldots , B_n \leadsto_P \sigma B_1, \ldots, \sigma A_1, \ldots, \sigma A_n, \ldots , \sigma B_n ,$$
if $(A \gets A_1, \ldots, A_n) \in P$,  and  $\sigma$ is the mgu of $A$ and $B_{j}$.
The derivation for $A$ succeeds when $A \leadsto_P [\ ]$; we use $\leadsto_P^*$ to denote several steps of SLD-resolution.

At first sight, the difference between TM-resolution and SLD-resolution seems to be that of notation. 
Indeed, both $ListNat \vdash \mathtt{list(cons(0,nil))}$ and\\
$ \mathtt{list(cons(0,nil))} \leadsto^*_{ListNat} [\ ]$ by the above rules (see also Figure~\ref{pic:tree}). However, 
$ListNat \nvdash \mathtt{list(cons(x,y))}$ whereas $ \mathtt{list(cons(x,y))} \leadsto^*_{ListNat} [\ ]$.
And, even more mysteriously, $GC \nvdash \mathtt{connected(x,y)}$ while $\mathtt{connected(x,y)} \leadsto_{GC} [\ ]$.

As it turns out, TM-resolution 
 reflects  the \emph{theorem proving} side of LP: rules of Definition~\ref{def:resolution} can be used to semi-decide whether a given term $t$ is entailed by $P$.
In contrast, SLD-resolution reflects the \emph{problem solving} aspect of LP: using the SLD-resolution rule, one asks whether, for a given $t$, a substitution $\sigma$ 
can be found such that $P \vdash \sigma(t)$.   There is a subtle but important difference between these two aspects of proof search.

For example, when considering the successful derivation $ \mathtt{list(cons(x,y))}$ $ \leadsto^*_{ListNat}  [\ ]$, we assume that $\mathtt{list(cons(x,y))}$ holds only relative to a computed substitution, e.g.
$\mathtt{x \mapsto 0, \ y \mapsto nil}$. 
Of course this distinction is natural from the point of view of theorem proving: $\mathtt{list(cons(x,y))}$ is not a ``theorem" in this generality, 
but its special case,  $\mathtt{list(cons(0,nil))}$, is. Thus, $ListNat \vdash \mathtt{list(cons(0,nil))}$ but $ListNat \nvdash \mathtt{list(cons(x,y))}$ (see also Figure~\ref{pic:tree}).
Similarly, $\mathtt{connected(x,y)} \leadsto_{GC} [\ ]$ should be read as: $\mathtt{connected(x,y)}$ holds relative to the computed substitution $\mathtt{y\mapsto x}$.

According to the soundness and completeness theorems for SLD-resolution~\cite{Llo}, the derivation $\leadsto$ has \emph{existential} meaning, 
i.e. when $\mathtt{list(cons(x,y))} \leadsto^*_{ListNat} [\ ]$, the succeeded goal $\mathtt{list(cons(x,y))}$ is not meant to be read as universally quantified over $\mathtt{x}$ an $\mathtt{y}$.
On the contrary, TM-resolution proves a universal statement. That is, $GC \vdash \mathtt{connected(x,x)}$ reads as:  $\mathtt{connected(x,x)}$ is entailed by GC for any $\mathtt{x}$.


 Much of our recent work has been devoted to formal understanding of the relation between the theorem proving and problem solving 
aspects of LP~\cite{JohannKK15,FK15}. 
The type-theoretic semantics of TM-resolution, given by ``Horn clauses as types, $\lambda$-terms as proofs" is given in~\cite{FK15,FKSP16}. 


Definition~\ref{def:resolution} gives rise to derivation trees. E.g. the derivation (or, equivalently, the proof) for $ListNat \vdash \mathtt{list(cons(0,nil))}$ can be represented by the following derivation tree: 

\begin{tikzpicture}[level 1/.style={sibling distance=18mm},
level 2/.style={sibling distance=18mm}, level 3/.style={sibling
distance=18mm},scale=.8,font=\footnotesize,baseline=(current bounding
box.north),grow=down,level distance=10mm]
\node (root) {$\mathtt{list(cons(0,nil))}$}
	child { node {$\mathtt{nat(0)}$}
	child { node {$[\ ]$}
	}}
  	child { node {$\mathtt{list(nil)}$}
		child { node {$[\ ]$}}};
  \end{tikzpicture} 
	
	In general, given a term $t$ and a program $P$, more than one derivation for $P \vdash t$ is possible.
	For example, if we add a fifth clause to program $ListNat$:\\
$5. \  \mathtt{list(cons(0,x)) \gets list(x)}$\\
then yet another, alternative, proof is possible  for the extended program: $ListNat^+ \vdash \mathtt{list(cons(0,nil))}$ via the clause $5$:
	
	\begin{tikzpicture}[level 1/.style={sibling distance=18mm},
level 2/.style={sibling distance=18mm}, level 3/.style={sibling
distance=18mm},scale=.8,font=\footnotesize,baseline=(current bounding
box.north),grow=down,level distance=10mm]
\node (root) {$\mathtt{list(cons(0,nil))}$}
	  	child { node {$\mathtt{list(nil)}$}
		child { node {$[\ ]$}}};
  \end{tikzpicture}
	
	To reflect the choice of derivation strategies at every stage of the derivation, we can introduce a new kind of nodes, \emph{or-nodes}.  
	For our example, this would give us the tree shown in Figure~\ref{pic:tree}, note the $\bullet$-nodes.
	
	\begin{figure}[t]
\begin{center}
		\begin{tikzpicture}[level 1/.style={sibling distance=30mm},
level 2/.style={sibling distance=20mm}, level 3/.style={sibling
distance=18mm},scale=.8,font=\footnotesize,baseline=(current bounding
box.north),grow=down,level distance=8mm]
\node (root) {$\mathtt{list(cons(0,nil))}$}
  child {[fill] circle (2pt)
	child { node {$\mathtt{nat(0)}$}
	  child {[fill] circle (2pt)
	child { node {$[\ ]$}
	}}}
  	child { node {$\mathtt{list(nil)}$}
		  child {[fill] circle (2pt)
		child { node {$[\ ]$}}}}}
		  child {[fill] circle (2pt)
				child { node {$\mathtt{list(nil)}$}
				  child {[fill] circle (2pt)
		child { node {$[\ ]$}}}}};
  \end{tikzpicture} \ \ \ \ \
		\begin{tikzpicture}[level 1/.style={sibling distance=30mm},
level 2/.style={sibling distance=20mm}, level 3/.style={sibling
distance=18mm},scale=.8,font=\footnotesize,baseline=(current bounding
box.north),grow=down,level distance=8mm]
\node (root) {$\mathtt{list(cons(x,y))}$}
  child {[fill] circle (2pt)
	child { node {$\mathtt{nat(x)}$}}
  	child { node {$\mathtt{list(y)}$}}};
  \end{tikzpicture}
\end{center}
	\caption{\textbf{Left:} a coinductive tree for $\mathtt{list(cons(0,nil))}$ and the extended program $ListNat^+$. \textbf{Right:} a coinductive tree for $\mathtt{list(cons(x,y))}$ and $ListNat^+$. The $\bullet$-nodes mark different clauses applicable to every atom in the tree.}
\label{pic:tree} 
	\end{figure}
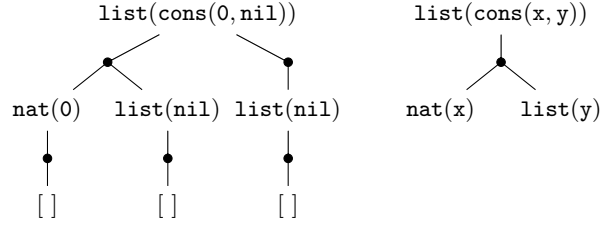
	
	This intuition is made precise in the following definition of a \emph{coinductive tree}, which first appeared in~\cite{KoP,KoPS} and was later refined in~\cite{JohannKK15} under the name of a rewriting tree. Note the use of mgms (rather than mgus) in the last item.

\begin{definition}[Coinductive tree]\label{def:cointree}
Let P be a logic program and $A$ be an atomic formula. The \emph{coinductive
tree} for $A$ is the possibly infinite tree T satisfying the following properties.
\begin{itemize}
\item $A$ is the root of $T$
\item Each node in $T$ is either an and-node or an or-node
\item Each or-node is given by $\bullet$
\item Each and-node is an atom
\item For every and-node $A'$ occurring in $T$, if there exist exactly $m > 0$ distinct clauses
$C_1, \ldots ,C_m$ in P (a clause $C_i$ has the form $B_i \gets  B^i_1,\ldots ,B^i_{n_i}$ for some $n_i$), such that
$A' = B_1\theta_1 = \ldots  = B_m\theta_m$, for mgms $\theta_1,\ldots ,\theta_m$, then $A'$ has exactly $m$ children given
by or-nodes, such that, for every $i \in m$, the $i$-th or-node has $n_i$ children given by and-nodes
$B^i_1\theta_i, \ldots ,B^i_{n_i}\theta_i$.
\end{itemize}
\end{definition}

Coinductive trees provide a convenient model for proofs by TM-resolution.  

Let us make one final observation on TM-resolution.
Generally, given a program $P$ and an atom $t$, one can prove that 

\begin{center}
$ t \leadsto^*_P [\ ]$ with computed substitution $\sigma$ iff  $P \vdash \sigma t$.
\end{center}

This simple fact may give an impression that proofs (and corresponding coinductive trees)  for TM-resolution are in some sense  fragments of reductions by SLD-resolution.
Compare e.g. the right-hand tree of Figure~\ref{pic:tree} before substitution and a grown left-hand tree obtained after the substitution.
 In this case, we could emulate the problem solving aspect of SLD-resolution by using coinductive trees and allowing to apply substitutions within coinductive trees, as was proposed in~\cite{KoP2,JohannKK15,FK15}. 
Such intuition would hold perfectly for e.g. ListNat, but would not hold for existential programs:
although there is a one step SLD-derivation for   $ \mathtt{connected(x,y)} \leadsto_{GC} [\ ]$ (with $\mathtt{y \mapsto x}$), TM-resolution proof for $ \mathtt{connected(x,y)}$ diverges and gives rise to the following infinite
coinductive tree:

\vspace*{-0.1in}
	\begin{tikzpicture}[level 1/.style={sibling distance=30mm},
level 2/.style={sibling distance=25mm}, level 3/.style={sibling
distance=25mm},scale=.8,font=\footnotesize,baseline=(current bounding
box.north),grow=down,level distance=8mm]
  \node {$\mathtt{connected(x,y)}$}
   child {[fill] circle (2pt)
     child { node {$\mathtt{edge(x,z)}$}}
       child { node {$\mathtt{connected(z,y)}$}
        child {[fill] circle (2pt)
       child { node{$\mathtt{edge(x,z_1)}$}}
         child { node{$\mathtt{connected(z_1,y)}$}
				child {node{$\vdots$}}}}
       }}; 
  \end{tikzpicture}
	
Not only the  proof for $GC \vdash \mathtt{connected(x,y)}$ is not in any sense a fragment of the derivation $\mathtt{connected(x,y)} \leadsto_{GC} [\ ]$, but it also takes 
larger (i.e. infinite) signature.
Thus, operational semantics of TM-resolution and SLD-resolution can be very different for existential programs: both in aspects of termination and signature size. 

This problem is orthogonal to non-termination. Consider  the non-terminating (but not existential) program Bad:

$\mathtt{bad(x)} \gets \mathtt{bad(x)}$\\
For Bad, operational behavior of TM-resolution and SLD-resolution are similar: derivations with both do not terminate and require finite signature.
Once again, such programs can be analysed using similar coinductive methods in TM- and SLD-resolution~\cite{FKSP16,SimonBMG07}.

	The problems caused by existential variables are known in the literature on theorem proving and term-rewriting~\cite{Terese}.
In TRS~\cite{Terese}, existential variables are not allowed to appear in rewriting rules, and in type inference based on term rewriting or TM-resolution, the restriction to non-existential programs is common~\cite{Jones97}. 


So theorem-proving, in contrast to problem-solving,  is modelled by term-matching; term-matching gives
rise to coinductive trees; and as explained in the introduction and, in more detail, later, coinductive trees give rise to laxness. So in this paper, we use laxness to model coinductive trees, and thereby theorem-proving in LP, and relate our semantics with Bonchi and Zanasi's work, which we believe models primarily problem-solving aspect of logic programming.

Categorical semantics for existential programs, which are known to be challenging for theorem proving, is the main contribution of Section~\ref{sec:derivation} and this paper.





 \section{Modelling coinductive trees for variable-free logic programs}\label{sec:parallel}


 In this section, we recall and develop the work of~\cite{KMP} and in particular we
 restrict our semantics to variable-free logic programs, i.e. we take $Var = \emptyset$ in Definition~\ref{df:syntax}.
Variable-free logic programs are operationally equivalent to propositional logic programs, as substitutions 
play no role in derivations. In this (propositional) setting, coinductive trees coincide with the and-or derivation trees known in the LP literature~\cite{GC}.

 \begin{proposition}\label{const:coal}
   For any set $\At$, there is a bijection between the set of
   variable-free logic programs over the set of atoms $\At$ and the set
   of $P_fP_f$-coalgebra structures on $\At$, where $P_f$ is the finite
   powerset functor on $Set$.
 \end{proposition}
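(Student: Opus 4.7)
The plan is to exhibit the bijection directly, by constructing explicit maps in both directions and checking that they are mutually inverse. The two occurrences of $P_f$ in $P_fP_f$ are meant to track, respectively, the two local finiteness conditions in the definition of a (variable-free) logic program: each atom is the head of only finitely many clauses, and each clause body contains only finitely many atoms.

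First I would define the forward map $\Phi$. Given a variable-free program $P$ over $\At$, set
$$\Phi(P)(A) \;=\; \bigl\{\, \{B_1,\ldots,B_n\} \,:\, (A \gets B_1,\ldots,B_n) \in P \,\bigr\}.$$
Each inner set lies in $P_f\At$ because clause bodies are finite, and the outer set lies in $P_fP_f\At$ because only finitely many clauses in $P$ have head $A$. For the reverse map $\Psi$, given $p:\At\to P_fP_f\At$, declare
$$\Psi(p) \;=\; \bigl\{\, A \gets B_1,\ldots,B_n \,:\, A\in\At,\ \{B_1,\ldots,B_n\}\in p(A)\,\bigr\},$$
choosing, for each $A$ and each $S \in p(A)$, one enumeration of $S$ to serve as the body of the corresponding clause.

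Next I would verify $\Phi\circ\Psi = \mathrm{id}$ and $\Psi\circ\Phi = \mathrm{id}$. The first equality is immediate from unwinding definitions: applying $\Phi$ to $\Psi(p)$ at $A$ returns exactly the collection of bodies-as-sets of clauses whose head is $A$, which by construction is $p(A)$. The second equality recovers the program $P$ from the coalgebra $\Phi(P)$ clause-by-clause.

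The main obstacle, and essentially the only subtle point, is the identification of clauses: the syntax in Definition~\ref{df:syntax} presents the body of a clause as a list $A\gets A_1,\ldots,A_n$, but $P_f$ records only the underlying set of body atoms, so both the ordering and the multiplicity of repeated atoms are discarded. To obtain a genuine bijection one must adopt, as is standard in the propositional/variable-free setting, the convention that two clauses with the same head and the same body-as-set are identified (the operational semantics in Definition~\ref{def:resolution} is insensitive to reordering and repetition of body atoms when no substitutions are involved). Once this convention is fixed, the choice of enumeration in the definition of $\Psi$ becomes immaterial, and $\Psi\circ\Phi = \mathrm{id}$ holds on the nose.
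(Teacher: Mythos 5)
Your construction is exactly the identification the paper has in mind: the paper states this proposition without an explicit proof, relying on the observation (made in the introduction) that each atom heads finitely many clauses and each body is a finite set of atoms, which is precisely your $\Phi$/$\Psi$ correspondence. Your remark that clauses must be identified up to reordering and repetition of body atoms (bodies as finite sets) is the right convention and matches the intended reading of Definition~\ref{df:syntax} in the variable-free setting, so the proposal is correct and essentially the same argument, just spelled out.
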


 \begin{theorem}\label{constr:Gcoalg}
   Let $C(P_fP_f)$ denote the cofree comonad on $P_fP_f$. Then, for $p:
   \At \longrightarrow P_f P_f(\At)$, the corresponding
   $C(P_fP_f)$-coalgebra $\overline{p}: \At \longrightarrow C(P_fP_f)(\At)$
sends an atom $A$ to the coinductive tree for $A$.
\end{theorem}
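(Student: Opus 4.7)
The plan is to use the standard description of the cofree comonad on a $Set$-endofunctor together with the universal property that defines $\overline{p}$. Since $P_fP_f$ is a finitary (and in particular accessible) endofunctor on $Set$, the cofree comonad $C(P_fP_f)$ exists and can be constructed, for each set $X$, as the final coalgebra of the functor $F_X(Y)=X\times P_fP_f(Y)$. Unfolding $F_{At}$ identifies $C(P_fP_f)(At)$ with the set of (possibly infinite, but finitely branching at every level) trees whose root carries an atom, whose root's children form a finite set of or-nodes, each or-node carrying a finite set of children that are again trees of this form. This is exactly the shape of a coinductive tree as in Definition~\ref{def:cointree}, with and-nodes labelled by atoms and or-nodes drawn as $\bullet$.

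Next I would recall that the coalgebra $\overline{p}$ is uniquely characterised by the cofree universal property: it is the unique map $At \to C(P_fP_f)(At)$ such that the composite $\epsilon_{At} \circ \overline{p}$, where $\epsilon: C(P_fP_f)\Rightarrow P_fP_f$ is the counit of the cofree comonad, equals $p$, and such that $\overline{p}$ commutes with the comultiplication in the appropriate way. Equivalently, using the final-coalgebra presentation of $C(P_fP_f)(At)$, $\overline{p}$ is the unique coalgebra morphism from the $F_{At}$-coalgebra $\langle \mathrm{id}_{At}, p\rangle : At \to At \times P_fP_fAt$ to the final $F_{At}$-coalgebra.

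I would then define $T:At \to C(P_fP_f)(At)$ by sending $A$ to its coinductive tree, and verify that $T$ is a coalgebra morphism of the above form. The verification is immediate from Definition~\ref{def:cointree} together with Proposition~\ref{const:coal}: the root of $T(A)$ is $A$ (giving the $At$-component), and the set of or-nodes immediately below the root, each labelled by the finite set of their children, is, by the variable-free case, exactly the element $p(A)\in P_fP_fAt$ once one replaces each atom $A'$ in $p(A)$ by the subtree $T(A')$. In other words, unfolding $T$ one step produces the pair $(A,\{T(A')\mid A'\in S, S\in p(A)\})$ in $At\times P_fP_f(\text{trees})$, which says precisely that $T$ is an $F_{At}$-coalgebra morphism. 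By uniqueness of such morphisms into the final coalgebra, $T=\overline{p}$.

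The main obstacle is the first step: making precise the identification of elements of $C(P_fP_f)(At)$ with coinductive trees in the sense of Definition~\ref{def:cointree}. Once one commits to the final-coalgebra description of $C(P_fP_f)(At)$ and observes that $P_fP_f$ encodes exactly the alternation ``finite set of or-nodes, each with a finite set of and-children'', the match with Definition~\ref{def:cointree} is structural and the remainder is a routine appeal to the universal property; there are no convergence issues because finiteness of branching at each level already lives inside the terminal-coalgebra description.
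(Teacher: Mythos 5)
Your argument is correct, and it reaches the same conclusion as the paper by a closely related but genuinely different route. The paper follows Worrell's terminal-sequence construction: it presents $C(P_fP_f)(\At)$ as the limit of the chain $\ldots \longrightarrow \At\times P_fP_f(\At\times P_fP_f(\At)) \longrightarrow \At\times P_fP_f(\At) \longrightarrow \At$, explicitly builds the cone $p_0=\mathrm{id}$, $p_{n+1}=\langle \mathrm{id}, P_fP_f(p_n)\circ p\rangle$, obtains $\overline{p}$ from the limiting property, and then identifies $\overline{p}(A)$ with the coinductive tree by matching each finite approximant $p_n(A)$ with the corresponding truncation of the tree (this is what makes worked examples such as Example~\ref{ex:free} fall out immediately). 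You instead invoke the characterisation of $C(P_fP_f)(\At)$ as the final coalgebra of $F_{\At}(Y)=\At\times P_fP_fY$, note that $\overline{p}$ is the unique $F_{\At}$-coalgebra morphism out of $\langle\mathrm{id},p\rangle$, define the tree-assigning map $T$ directly, check it is a coalgebra morphism, and conclude $T=\overline{p}$ by uniqueness; this trades the level-by-level coinductive comparison for a one-shot appeal to finality, and it quietly sidesteps the delicate point of whether the $\omega$-chain limit already computes the cofree comonad for a finitary functor like $P_fP_f$ (your existence claim only needs accessibility, which is safe). Two small points to tighten: the one-step unfolding of $T(A)$ should be written $(A,\,P_fP_f(T)(p(A)))$, i.e.\ the set of sets $\{\{T(A')\mid A'\in S\}\mid S\in p(A)\}$ rather than the flattened set you wrote; and the identification of elements of the final coalgebra with coinductive trees implicitly identifies bisimilar (e.g.\ duplicate) subtrees, since $P_fP_f$ takes sets of sets --- the paper glosses this in the same way, so it is a presentational remark rather than a gap.
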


\begin{proof}
Applying the work of~\cite{W} to this setting, the cofree comonad is in general determined as follows:
$C(P_fP_f)(\At)$ is the
   limit of the diagram 
 $$\ldots \longrightarrow \At \times P_fP_f(\At
 \times P_fP_f(\At)) \longrightarrow \At \times P_fP_f(\At)
 \longrightarrow \At.$$
with maps determined by the projection $\pi_0:At\times P_fP_f(At)\longrightarrow At$, with applications
of the functor $At \times P_fP_f(-)$ to it.

 Putting $\At_0 = \At$ and $\At_{n+1} = \At \times
 P_fP_f\At_n$, and defining the cone
 \begin{eqnarray*}
   p_0 & = & id: \At \longrightarrow \At ( = \At_0)\\
   p_{n+1} & = & \langle id, P_fP_f(p_n) \circ p \rangle : \At
   \longrightarrow \At \times P_fP_f \At_n ( = \At_{n+1})
 \end{eqnarray*}
 the limiting property of the diagram determines the coalgebra
 $\overline{p}: \At \longrightarrow C(P_fP_f)(\At)$.
The image $\overline{p}(A)$ of an atom $A$ is given by an element of the limit, equivalently a map from $1$ into the limit,
equivalently a cone of the diagram over $1$. 

To give the latter is equivalent to giving an element $A_0$ of $At$, specifically $p_0(A) = A$, together with an element 
$A_1$ of $At\times P_fP_f(At)$, specifically $p_1(A) = (A,p_0(A)) = (A,p(A))$, together with an element 
$A_2$ of $At\times P_fP_f(At\times P_fP_f(At))$, etcetera. The definition of the coinductive tree for $A$ is inherently coinductive, matching
the definition of the limit, and with the first step agreeing with the definition of $p$. Thus it follows by coinduction that $\overline{p}(A)$
can be identified with the coinductive tree for $A$.
 \end{proof}

\begin{example} \label{ex:free}
Let $At$ consist of atoms $\mathtt{A,B,C}$ and $\mathtt{D}$. Let  $P$ denote the logic program
 \begin{eqnarray*}
 \mathtt{A} & \gets & \mathtt{B,C} \\
\mathtt{A} & \gets & \mathtt{B,D} \\
\mathtt{D} & \gets & \mathtt{A,C}\\
 \end{eqnarray*}
So $p(\mathtt{A}) = \{ \{ \mathtt{B,C}\} , \{ \mathtt{B,D} \} \}$, $p(\mathtt{B}) = p(\mathtt{C}) = \emptyset$, and $p(\mathtt{D}) = \{ \{ \mathtt{A,C}\} \}$.

Then $p_0(\mathtt{A}) = \mathtt{A}$, which is the root of the coinductive tree for $\mathtt{A}$.

Then $p_1(\mathtt{A}) = (\mathtt{A},p(\mathtt{A})) = (\mathtt{A},\{ \{ \mathtt{B,C}\} , \{ \mathtt{B,D} \} \})$, which consists of the same information as in the first three levels of the coinductive 
tree for $\mathtt{A}$, i.e., the root $\mathtt{A}$, two or-nodes, and below each of the two
or-nodes, nodes given by each atom in each antecedent of each clause with head $\mathtt{A}$ in the logic program $P$: nodes marked $\mathtt{B}$ and $\mathtt{C}$
lie below the first or-node, and nodes marked $\mathtt{B}$ and $\mathtt{D}$ lie below the second or-node, exactly as $p_1(\mathtt{A})$ describes.

Continuing, note that $p_1(\mathtt{D}) = (\mathtt{D},p(\mathtt{D})) = (\mathtt{D},\{ \{\mathtt{A,C}\} \})$. So
\[
\begin{array}{ccl}
p_2(\mathtt{A}) & = & (\mathtt{A},P_fP_f(p_1)(p(\mathtt{A})))\\
            & = & (\mathtt{A},P_fP_f(p_1)( \{ \{ \mathtt{B,C}\} , \{ \mathtt{B,D} \} \}))\\
            & = & (\mathtt{A}, \{ \{( \mathtt{B},\emptyset),(\mathtt{C},\emptyset)\} , \{( \mathtt{B},\emptyset),(\mathtt{D},\{ \{\mathtt{A,C}\} \}) \} \})
\end{array}
\]
which is the same information as that in the first five levels of the coinductive tree for 
$\mathtt{A}$: $p_1(\mathtt{A})$ provides the first three levels of $p_2(\mathtt{A})$ because $p_2(\mathtt{A})$
must map to $p_1(\mathtt{A})$ in the cone; in the coinductive tree, there are two and-nodes at level 3, labelled by $\mathtt{A}$ and $\mathtt{C}$. As there are no clauses
with head $\mathtt{B}$ or $\mathtt{C}$, no or-nodes lie below the first three of the and-nodes at level 3. However, there is one or-node lying
below $\mathtt{D}$, it branches into and-nodes labelled by $\mathtt{A}$ and $\mathtt{C}$, which is exactly as $p_2(\mathtt{A})$ tells us. 
\end{example}

For pictures of such trees, see~\cite{KoPS}.

 \section{Modelling coinductive trees for logic programs without existential variables}\label{sec:recall}

We now lift the restriction on $Var = \emptyset$ in Definition~\ref{df:syntax}, and consider first-order terms and atoms in full generality, however, we
restrict the definition of clauses in Definition~\ref{df:syntax} to those not containing existential variables.

The \emph{Lawvere theory}  $\mathcal{L}_{\Sigma}$ \emph{generated by} a
signature $\Sigma$ is (up to isomorphism, as there are several equivalent formulations) 
the category defined as follows: $\texttt{ob}(\mathcal{L}_{\Sigma})$ is the set of
   natural numbers.
For each natural number $n$, let $x_1,\ldots ,x_n$ be a specified list
 of distinct variables. Define
 $\mathcal{L}_{\Sigma}(n,m)$ to be the set of $m$-tuples
 $(t_1,\ldots ,t_m)$ of terms generated by the function symbols in
 $\Sigma$ and variables $x_1,\ldots ,x_n$. Define composition in
 $\mathcal{L}_{\Sigma}$ by substitution. 

 One can readily check that these constructions satisfy the axioms for
 a category, with $\mathcal{L}_{\Sigma}$ having
 strictly associative finite products given by the sum of natural
 numbers. The terminal object of $\mathcal{L}_{\Sigma}$ is the natural
 number $0$.

 \begin{example}\label{ex:arrows}
   Consider ListNat. 
   The constants $\mathtt{O}$ and $\mathtt{nil}$ are maps
   from $0$ to $1$ in $\mathcal{L}_{\Sigma}$, $\mathtt{s}$ is modelled by
   a map from $1$ to $1$, and $\mathtt{cons}$ is modelled by a map from
   $2$ to $1$. The term $\mathtt{s(0)}$ is the map
   from $0$ to $1$ given by the composite of the maps modelling
   $\mathtt{s}$ and $\mathtt{0}$. 
 \end{example}

Given an arbitrary logic program $P$ with signature $\Sigma$,
 we can extend the set $At$ of atoms for a variable-free
 logic program to the functor $At:\ls^{op}
 \rightarrow Set$ that sends a natural number $n$ to the set of all
 atomic formulae, with variables among $x_1,\ldots ,x_n$, generated by
 the function symbols in $\Sigma$ and by the predicate symbols in $P$. 
A map $f:n \rightarrow m$ in $\ls$ is sent to
 the function $At(f):At(m) \rightarrow At(n)$ that sends an atomic
 formula $A(x_1, \ldots,x_m)$ to $A(f_1(x_1, \ldots ,x_n)/x_1, \ldots
 ,f_m(x_1, \ldots ,x_n)/x_m)$, i.e., $At(f)$ is defined by
 substitution.

 As explained in the Introduction and in~\cite{KMP}, we cannot model a logic program by a
 natural transformation of the form $p:At\longrightarrow P_fP_fAt$ as
 naturality breaks down, e.g., in ListNat. So, in~\cite{KoP,KoPS}, we relaxed naturality to lax naturality. 
In order to define it, we extended
 $At:\ls^{op}\rightarrow Set$ to have codomain $Poset$ by composing $At$ with the inclusion of $Set$ into
 $Poset$. Mildly overloading notation, we denote the
 composite by $At:\ls^{op}\rightarrow Poset$.
 
 \begin{definition}
   Given functors $H,K:\ls^{op} \longrightarrow Poset$, a {\em lax
     transformation} from $H$ to $K$ is the assignment to each
   object $n$ of $\ls$, of an order-preserving function $\alpha_n: Hn \longrightarrow Kn$ such
   that for each map $f:n \longrightarrow m$ in $\ls$, one has
   $(Kf)(\alpha_m) \leq (\alpha_{n})(Hf)$, pictured as follows:
\begin{diagram}
Hm & \rTo^{\alpha_m} & Km \\
\dTo<{Hf} & \geq & \dTo>{Kf} \\
Hn & \rTo_{\alpha_n} & Kn
\end{diagram}
 \end{definition}
 
 Functors and lax transformations, with pointwise composition, form a locally ordered category
 denoted by $Lax(\ls^{op},Poset)$. Such categories and generalisations have been studied 
extensively, e.g., in~\cite{Ben,BKP,Kelly,KP1}.

 \begin{definition}\label{def:poset}
 Define $P_f:Poset\longrightarrow Poset$ by letting
   $P_f(P)$ be the partial order given by the set of finite subsets of
   $P$, with $A\leq B$ if for all $a \in A$, there exists
   $b \in B$ for which $a\leq b$ in $P$, with behaviour on maps
   given by image. Define $P_c$ similarly but with countability
   replacing finiteness.
 \end{definition}

We are not interested in arbitrary posets in modelling logic programming, only those that arise, albeit inductively, by taking subsets of a set 
qua discrete poset. So we gloss over the fact that, for an arbitrary
poset $P$, Definition~\ref{def:poset} may yield factoring, with the underlying set of $P_f(P)$ being
a quotient of the set of subsets of $P$. It does not affect the line of development here.

\begin{example}\label{ex:listnat2}
Modelling Example~\ref{ex:listnat}, ListNat generates a lax transformation of the form  $p:At\longrightarrow P_fP_fAt$ as follows:
$At(n)$ is the set of atomic formulae in $ListNat$ with at most $n$ variables. 

For example, $At(0)$ consists of $\mathtt{nat(0)}$, $\mathtt{nat(nil)}$, $\mathtt{list(0)}$, $\mathtt{list(nil)}$,
$\mathtt{nat(s(0))}$, $\mathtt{nat(s(nil))}$, $\mathtt{list(s(0))}$, $\mathtt{list(s(nil))}$, $\mathtt{nat(cons(0, 0))}$, $\mathtt{nat(cons( 0, nil))}$, \\
$\mathtt{nat(cons (nil, 0))}$, $\mathtt{nat(cons( nil, nil))}$, etcetera.

Similarly, $At(1)$ includes all atomic formulae containing at most one (specified) variable $x$, thus all the elements of
$At(0)$ together with $\mathtt{nat(x)}$, $\mathtt{list(x)}$, $\mathtt{nat(s(x))}$, $\mathtt{list(s(x))}$, 
$\mathtt{nat(cons( 0, x))}$, $\mathtt{nat(cons (x, 0))}$, $\mathtt{nat(cons (x, x))}$, etcetera.

The function $p_n:At(n)\longrightarrow P_fP_fAt(n)$ sends each element of $At(n)$, i.e., each 
atom $A(x_1,\ldots ,x_n)$ with variables among
$x_1,\ldots ,x_n$, to the set of sets of atoms in the antecedent of each unifying substituted instance of a clause in $P$ with head for which a
unifying substitution agrees with $A(x_1,\ldots ,x_n)$.

Taking $n=0$, $\mathtt{nat(0)}\in At(0)$ is the head of one clause, and there is no other clause for which a unifying substitution will make
its head agree with $\mathtt{nat(0)}$. The clause with head $\mathtt{nat(0)}$ has the empty set of atoms as its tail,  so $p_0(\mathtt{nat(0)}) = \{ \emptyset \}$.

Taking $n=1$, $\mathtt{list(cons( x, 0))}\in At(1)$ is the head of one clause given by a unifying substititution applied to the final clause
of ListNat, and accordingly $p_1(\mathtt{list(cons (x, 0))}) = \{ \{ \mathtt{nat(x)},\mathtt{list(0)} \} \}$.

The family of functions $p_n$ satisfy the inequality required to form a lax transformation precisely because of the allowability
of substitution instances of clauses, as in turn is required to model logic programming. The family does not satisfy the strict
requirement of naturality as explained in the introduction.
\end{example}

\begin{example}\label{ex:lp2}
Attempting to model Example~\ref{ex:lp} by
 mimicking the model of ListNat as a lax transformation of the form $p:At\longrightarrow P_fP_fAt$ in Example~\ref{ex:listnat2} fails.

Consider the clause
\[
 \mathtt{connected(x,y)}  \gets  \mathtt{edge(x,z)},
 \mathtt{connected(z,y)} 
\]

Modulo possible renaming of variables, the head of the clause, i.e., the atom $\mathtt{connected(x,y)}$, lies in $At(2)$ as it has two variables. However,
the tail does not lie in $P_fP_fAt(2)$ as the tail has three variables rather than two. 

We dealt with that inelegantly in~\cite{KoP}: in order to allow $p_2(\mathtt{connected(x,y)})$ to model
GC in any reasonable sense, we allowed substitutions for $z$ by any term on $x,y$ on the basis that
there is no unifying such, so we had better allow all possibilities. So, rather than modelling
the clause directly, recalling that $At(2)\subseteq At(3)\subseteq At(4)$, etcetera, modulo renaming of variables, we put

{\small{
\begin{eqnarray*}
p_2(\mathtt{connected(x,y)}) & = &  \{  \{\mathtt{edge(x,x)},\mathtt{connected(x,y)}\}, \{\mathtt{edge(x,y)},\mathtt{connected(y,y)}\}\}\\
p_3(\mathtt{connected(x,y)}) & = &  \{  \{\mathtt{edge(x,x)},\mathtt{connected(x,y)}\}, \{\mathtt{edge(x,y)},\mathtt{connected(y,y)}\},\\
                                                &     &    \{\mathtt{edge(x,z)},\mathtt{connected(z,y)}\} \}\\
p_4(\mathtt{connected(x,y)}) & = &  \{  \{\mathtt{edge(x,x)},\mathtt{connected(x,y)}\}, \{\mathtt{edge(x,y)},\mathtt{connected(y,y)}\},\\
                                                &     &    \{\mathtt{edge(x,z)},\mathtt{connected(z,y)}\},\{\mathtt{edge(x,w)},\mathtt{connected(w,y)}\} \}
\end{eqnarray*}}}
etcetera: for $p_2$, as only two variables $x$ and $y$ appear in any element of $P_fP_fAt(2)$, we allowed substitution by either 
$x$ or $y$ for $z$; for $p_3$, a third variable may appear in an element of $P_fP_fAt(3)$, allowing an additional possible subsitution; for $p_4$, a fourth variable may appear, etcetera.

Countability arises if a unary symbol $s$ is added to GC, as  in that case, for $p_2$, not only did we allow $x$ and $y$ to be substituted
for $z$, but we also allowed $s^n(x)$ and $s^n(y)$ for any $n>0$, and to do that, we replaced $P_fP_f$ by $P_cP_f$, allowing
for the countably many possible substitutions.

Those were  inelegant
decisions, but they allowed us to give some kind of model of all logic programs.
\end{example}

We now turn to the relationship between the lax transformation
$p:At\longrightarrow P_cP_fAt$ modelling a logic program $P$ and
$\overline{p}:At\longrightarrow C(P_cP_f)At$, the corresponding
coalgebra for the cofree comonad $C(P_cP_f)$ on $P_cP_f$. 

We recall the central abstract result of~\cite{KoP}, the notion of an
"oplax" map of coalgebras being required to match that of lax
transformation. Notation of the form \mbox{$H$-$coalg$} refers to
coalgebras for an endofunctor $H$, while notation of the form
\mbox{$C$-$Coalg$} refers to coalgebras for a comonad $C$. The subscript
$oplax$ refers to oplax maps, and given an
endofunctor $E$ on $Poset$, the notation $Lax(\ls^{op},E)$ denotes the
endofunctor on $Lax(\ls^{op},Poset)$ given by post-composition with
$E$; similarly for a comonad.

 \begin{theorem}\label{main}
 For any locally ordered endofunctor $E$ on $Poset$, if $C(E)$ is
 the cofree comonad on $E$, then there is a canonical isomorphism
 \[
 Lax(\ls^{op},E)\mbox{-}coalg_{oplax} \simeq
 Lax(\ls^{op},C(E))\mbox{-}Coalg_{oplax}
 \]
 \end{theorem}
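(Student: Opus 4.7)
The plan is to leverage the standard isomorphism between $E$-coalgebras and $C(E)$-coalgebras and lift it, carefully, to the locally ordered (2-categorical) setting, exploiting pointwise computability of both limits and the cofree comonad.

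First, I would recall the construction of $C(E)$ used in the proof of Theorem~\ref{constr:Gcoalg}: $C(E)(P)$ is the limit of
\[
\ldots \longrightarrow P \times E(P \times E(P)) \longrightarrow P \times E(P) \longrightarrow P
\]
This construction carries through verbatim for locally ordered $E$ on $Poset$ because $Poset$ has the required limits, and those limits are locally ordered. Next, I would observe that $Lax(\ls^{op},Poset)$ is a locally ordered category whose limits are computed pointwise, and that the endofunctor $Lax(\ls^{op},E)$ given by post-composition with $E$ is itself locally ordered and preserves those pointwise limits since $E$ does in each coordinate. Consequently, applying the same limit construction to $Lax(\ls^{op},E)$ yields, pointwise at each $n\in\ls^{op}$, the object $C(E)(H(n))$, so the cofree comonad on $Lax(\ls^{op},E)$ is canonically $Lax(\ls^{op},C(E))$.

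From this identification, the desired isomorphism $Lax(\ls^{op},E)\text{-}coalg \simeq Lax(\ls^{op},C(E))\text{-}Coalg$ on objects is just the universal property of the cofree comonad $Lax(\ls^{op},C(E))$: a lax transformation $p:H \to E\circ H$ corresponds uniquely to a $Lax(\ls^{op},C(E))$-coalgebra structure $\overline{p}:H \to C(E)\circ H$, and $\overline{p}$ can be described levelwise by applying the construction of Theorem~\ref{constr:Gcoalg} at each $n$. The subtler point is the oplax morphism correspondence: given oplax coalgebra morphisms on the $E$-side, one needs to check that the induced maps between cofree coalgebras satisfy the corresponding oplax inequality, and conversely that every oplax $C(E)$-coalgebra morphism restricts, via the projection $C(E)\Rightarrow E\circ C(E)$ composed with the structure map, to an oplax $E$-coalgebra morphism. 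Since the cofree comonad is built from a limit, the required 2-cells propagate uniquely through each projection $\pi_k$ of the limit cone, so the universal property extends from 1-cells to oplax 2-cells.

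The main obstacle is the oplax 2-dimensional coherence rather than the 1-categorical isomorphism. Concretely, one must verify (i) that post-composition with $E$ genuinely preserves the pointwise limits defining $C(E)$ at the locally ordered level, so that no extra 2-cells are introduced, and (ii) that the oplax inequalities for morphisms are preserved by the inductive construction $p_{n+1} = \langle id, E(p_n)\circ p\rangle$ used in Theorem~\ref{constr:Gcoalg}. Both amount to checking, componentwise, that if an oplax inequality holds at stage $n$, then applying $E$ (which is locally monotone by hypothesis) and pairing with identity preserves the inequality at stage $n+1$; this is a straightforward induction, but it is the step where the locally ordered hypothesis on $E$ is essential. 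Once these checks are in place, the isomorphism is canonical because it arises entirely from universal properties of limits in a 2-categorical sense, exactly as in the references~\cite{Kelly,KP1}.
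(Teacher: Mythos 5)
There is a genuine gap, and it concerns both the hypotheses you use and the part of the statement that carries the real content. Theorem~\ref{main} is stated for an \emph{arbitrary} locally ordered endofunctor $E$ on $Poset$, assuming only that the cofree comonad $C(E)$ exists; it assumes neither that $E$ is accessible or finitary nor that $E$ preserves any limits. Your argument, however, starts by identifying $C(E)$ with the $\omega$-chain limit used in Theorem~\ref{constr:Gcoalg} and then needs post-composition with $E$ to preserve the corresponding pointwise limits in $Lax(\ls^{op},Poset)$. That limit description of $C(E)$ is only available under extra hypotheses (Worrell's construction \cite{W} needs accessibility, and convergence at $\omega$ needs more still); indeed, for the application in this paper the outer functor is $P_c$, which is only countably accessible, so even there the plain $\omega$-limit need not compute the cofree comonad. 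The auxiliary claim that limits in $Lax(\ls^{op},Poset)$ are computed pointwise also needs an argument, since the morphisms are lax transformations. So your proof, as it stands, establishes at best a special case of the theorem under unstated continuity assumptions on $E$.

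The second, more substantive issue is the oplax morphism correspondence. Note that the paper's logical order is the opposite of yours: the cofreeness statement for the lifted comonad is Corollary~\ref{oldcor}, which is \emph{deduced from} Theorem~\ref{main}, whereas you first try to prove that $Lax(\ls^{op},C(E))$ is cofree on $Lax(\ls^{op},E)$ and then invoke ``the universal property''. Even granting cofreeness, the ordinary (1-dimensional) couniversal property only yields the isomorphism between coalgebras with \emph{strict} morphisms; the entire point of Theorem~\ref{main} is that the isomorphism holds for the categories with \emph{oplax} morphisms, and that is a genuinely 2-dimensional statement, which is where the locally ordered structure and the machinery recalled from \cite{Kelly,KP1} (and the proof in \cite{KoP}) do the work. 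Your justification of this step --- that ``2-cells propagate uniquely through each projection of the limit cone'' --- again relies on the limit presentation that is not available in the stated generality, and it does not explain why the bijection of structure maps sends an oplax square on the $E$-side to an oplax square on the $C(E)$-side and conversely. To repair the argument you would either have to restrict the theorem's hypotheses (and then check they cover $P_cP_f$), or, better, argue directly from the assumed couniversal property of $C(E)$ in the $Poset$-enriched sense, establishing the oplax correspondence without any construction of $C(E)$, which is the route the paper's source takes.
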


\begin{corollary}\label{oldcor}
$Lax(\ls^{op},C(P_cP_f))$ is the cofree
comonad on $Lax(\ls^{op},P_cP_f)$.
\end{corollary}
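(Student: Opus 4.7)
The plan is to derive Corollary~\ref{oldcor} as an immediate instance of Theorem~\ref{main}, taking $E = P_cP_f$, together with the standard universal characterisation of cofree comonads by their coalgebras. So the work splits into (i) checking the hypothesis of Theorem~\ref{main}, (ii) quoting the theorem, and (iii) translating the resulting isomorphism of coalgebra categories into the cofreeness assertion.

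For step (i), I would verify that $P_cP_f$ is a locally ordered endofunctor on $Poset$. By Definition~\ref{def:poset}, both $P_f$ and $P_c$ act on a poset map $f$ by taking the image of a finite (respectively, countable) subset. One checks routinely that if $f \leq g$ pointwise between poset maps then $P_f(f)(A) \leq P_f(g)(A)$ in the order on finite subsets, and likewise for $P_c$; composition of locally ordered endofunctors is again locally ordered, so $P_cP_f$ qualifies.

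For steps (ii) and (iii), applying Theorem~\ref{main} directly yields a canonical isomorphism
\[
Lax(\ls^{op}, P_cP_f)\mbox{-}coalg_{oplax} \;\simeq\; Lax(\ls^{op}, C(P_cP_f))\mbox{-}Coalg_{oplax}
\]
over the locally ordered category $Lax(\ls^{op}, Poset)$. Now I would invoke the universal property of cofree comonads in the enriched setting: given an endofunctor $F$ on a (locally ordered) category $\mathcal{C}$, a comonad $D$ on $\mathcal{C}$ with a counit map $D \Rightarrow F$ is the cofree comonad on $F$ exactly when the induced (locally ordered) comparison functor from $D$-coalgebras to $F$-coalgebras is an isomorphism over $\mathcal{C}$. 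The isomorphism supplied by Theorem~\ref{main}, applied to $Lax(\ls^{op}, P_cP_f)$ as the endofunctor and $Lax(\ls^{op}, C(P_cP_f))$ as the candidate comonad (with comonad structure inherited pointwise from that of $C(P_cP_f)$ via post-composition), is precisely of this form, so the corollary follows.

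The only subtle point — and where I would expect the main obstacle — is the enriched bookkeeping: one must check that the comparison in Theorem~\ref{main} respects the local order structure (so that it is an isomorphism of locally ordered categories, not merely of underlying categories) and that the post-composition operation $Lax(\ls^{op}, -)$ preserves comonad structure, i.e., that applying it to the counit, comultiplication and comonad axioms of $C(P_cP_f)$ yields a bona fide comonad on $Lax(\ls^{op}, Poset)$. Both points are formal once the construction of $Lax(\ls^{op}, -)$ on 2-cells and the pointwise definition of composition in $Lax(\ls^{op}, Poset)$ are unwound, but they are where the proof requires care rather than calculation.
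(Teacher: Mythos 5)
Your derivation matches the paper's: the corollary is obtained exactly as you propose, by instantiating Theorem~\ref{main} at $E = P_cP_f$ and reading the resulting isomorphism of oplax coalgebra categories as (algebraic, hence genuine) cofreeness of the post-composition comonad $Lax(\ls^{op},C(P_cP_f))$. The one hypothesis you leave implicit --- existence of the cofree comonad $C(P_cP_f)$ on $Poset$, which follows from accessibility of $P_cP_f$ --- is likewise taken for granted by the paper, since the statement of the corollary itself already refers to $C(P_cP_f)$.
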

Corollary~\ref{oldcor} gives a bijection between lax transformations
\[
p:At\longrightarrow P_cP_fAt
\]
and lax transformations 
\[
\overline{p}:At\longrightarrow C(P_cP_f)At
\]
subject to the two conditions required of a coalgebra of a comonad. Subject to the routine replacement of the outer
copy of $P_f$ by $P_c$ in the construction in Theorem~\ref{constr:Gcoalg}, the same
construction, if understood pointwise, extends to this setting, i.e., if one uniformly replaces $At$ by $At(n)$ in the construction of
Theorem~\ref{constr:Gcoalg}, and replaces
the outer copy of $P_f$ by $P_c$, one obtains a description of $C(P_cP_f)At(n)$ together with the construction of $\overline{p}_n$ from
$p_n$.

That is fine for ListNat, modelling the coinductive trees generated by ListNat, the same holding for any logic program without existential variables, but for GC, as explained in 
Example~\ref{ex:lp2}, $p$ did \emph{not} model the clause
\[
 \mathtt{connected(x,y)}  \gets  \mathtt{edge(x,z)},
 \mathtt{connected(z,y)} 
\]
directly, and so its extension \emph{a fortiori} could \emph{not} model the 
coinductive trees generated by $\mathtt{connected(x,y)}$.

For arbitrary logic programs, $\overline{p}(A(x_1,\ldots ,x_n))$ was  a variant of the coinductive tree
generated by $A(x_1,\ldots ,x_n)$  in two key ways:
\begin{enumerate}
\item coinductive trees allow new variables to be introduced as one passes down the tree, e.g., with
\[
 \mathtt{connected(x,y)}  \gets  \mathtt{edge(x,z)},
 \mathtt{connected(z,y)}
\]
appearing directly in it, whereas, extending Example~\ref{ex:lp2},  $\overline{p_1}(\mathtt{connected(x,y)})$ does not model such a clause directly, 
but rather substitutes terms on $x$ and $y$ for $z$, continuing inductively as one proceeds.
\item coinductive trees are finitely branching, as one expects in logic programming, whereas $\overline{p}(A(x_1,\ldots ,x_n))$ 
could be infinitely branching, e.g., for GC with an additional unary operation $s$.
\end{enumerate}

 \section{Modelling coinductive trees for arbitrary logic progams}\label{sec:derivation}

We believe that our work in~\cite{KoP} provides an interesting model of ListNat, in particular because it agrees with the coinductive trees
generated by ListNat. However, the account in~\cite{KoP} is less interesting when applied to GC, thus in the full generality
of logic programming. Restriction to non-existential examples such
as ListNat is common for implementational reasons~\cite{KoPS,JohannKK15,FK15,FKSP16}, so~\cite{KoP} does allow the
modeling of coinductive trees for a natural class of logic programs. Here we seek to model coinductive trees for  logic 
programs  in general, \emph{a fortiori} doing so for GC. 

In order to model coinductive trees, it follows from Example~\ref{ex:lp2} that the endofunctor $Lax(\ls^{op},P_fP_f)$ on $Lax(\ls^{op},Poset)$ that sends $At$ to $P_fP_fAt$, needs to be refined as $\{ \{\mathtt{edge(x,z),connected(z,y)}\}\}$ is
not an element of $P_fP_fAt(2)$ as it involves three variables $x$, $y$ and $z$. Motivated by that example, we refine our axiomatics in general so
that the codomain of $p_n$ is a superset of $P_fP_fAt(m)$ for every $m\geq n$. There are six injections of $2$ into $3$, inducing six inclusions $At(2)\subseteq At(3)$, so six inclusions $P_fP_fAt(2)\subseteq P_fP_fAt(3)$, and one only wants to count each element of $P_fP_fAt(2)$ once. So we refine
$P_fP_fAt(n)$ to become $(\Sigma_{m\geq n} P_fP_fAt(m))/\equiv$, where $\equiv$ is generated by the injections $i:n\longrightarrow m$.
This can be made precise in abstract category theoretic terms as follows. 

For any Lawvere theory $L$, there is a canonical identity-on-objects functor from the category $Inj$ of injections $i:n\longrightarrow m$ of natural numbers into $L^{op}$.
So, in particular, there is a canonical identity on objects functor $J:Inj\longrightarrow \ls^{op}$, upon which 
$\Sigma_{m\geq n} P_fP_fAt(m)/\equiv$ may be characterised as the colimit (see~\cite{Mac} or, for the enriched version,~\cite{K})
\[
\int^{m\in n/Inj} P_fP_fAtJ(m)
\]
or equivalently, given $n\in Inj$, the colimit of the functor from
$n/Inj$ to $Poset$ that sends an injection $j:n\longrightarrow m$ to
$P_fP_fAtJ(m)$.

This construction extends to a functor $P_{ff}(At):\ls^{op}\longrightarrow Poset$ by sending a map $f:n\longrightarrow n'$ in $\ls$ to the 
order-preserving function
\[
\int^{m\in n'/Inj} P_fP_fAtJ(m) \longrightarrow \int^{m\in n/Inj} P_fP_fAtJ(m)
\]
determined by the fact that each $m\in n'/Inj$ is, up to coherent isomorphism, uniquely of the form $n'+k$, allowing one to 
apply $P_fP_fAt$ to the map $f+k:n+k\longrightarrow n'+k = m$ in $\ls$.
This is similar to the behaviour of the monad for local state on maps~\cite{PP2}.

It is routine to generalise the construction from $At$ to make it apply to an arbitrary functor $H:\ls^{op}\longrightarrow Poset$. 

In order to make the construction functorial, i.e., in order to make it respect 
maps $\alpha:H\Rightarrow K$, we need to refine 
$Lax(\ls^{op},Poset)$ as the above colimit strictly
respects injections, i.e., 
for any \emph{injection} $i:n\longrightarrow m$, we want the diagram
\begin{diagram}
Hn & \rTo^{\alpha_n} & Kn \\
\dTo<{Hi} & & \dTo>{Ki} \\
Hm & \rTo_{\alpha_m} & Km 
\end{diagram}
to commute.

Summarising this discussion yields the following:

\begin{definition}\label{def:laxinj}
Let $Lax_{Inj}(\ls^{op},Poset)$ denote the category with objects given by functors from $\ls^{op}$ to $Poset$, maps given by lax transformations
that strictly respect injections, and composition given pointwise. 
\end{definition}

\begin{proposition}\label{prop:ff} cf~\cite{PP2}
Let $J:Inj\longrightarrow \ls^{op}$ be the canonical inclusion. 
Define 
\[
P_{ff}:Lax_{Inj}(\ls^{op},Poset)\longrightarrow Lax_{Inj}(\ls^{op},Poset)
\]
by
$(P_{ff}(H))(n) = \int^{m\in n/Inj} P_fP_fHJ(m)$,
with, for any map $f:n\longrightarrow n'$ in $\ls$,
\[
(P_{ff}(H))(f):\int^{m\in n'/Inj} P_fP_fHJ(m) \longrightarrow \int^{m\in n/Inj} P_fP_fHJ(m)
\]
determined by the fact that each $m\in n'/Inj$ is, up to coherent isomorphism, uniquely of the form $n'+k$, allowing one to 
apply $P_fP_fH$ to the map $f+k:n+k\longrightarrow n'+k = m$ in $\ls$.

Given $\alpha:H\Rightarrow K$, define $P_{ff}(\alpha)(n)$ 
by the fact that $m\in n/Inj$ is uniquely of the form $n+k$, and using
\[
\alpha_{n+k}:H(m) = H(n+k)\longrightarrow K(n+k) = K(m)
\]
Then $P_{ff}$ is an endofunctor on  $Lax_{Inj}(\ls^{op},Poset)$.
\end{proposition}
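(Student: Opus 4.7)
The plan is to verify three claims in turn: (i) $P_{ff}(H)$ is a strict functor $\ls^{op}\to Poset$; (ii) $P_{ff}(\alpha)$ is a well-defined lax transformation strictly respecting injections whenever $\alpha$ is; and (iii) $P_{ff}$ itself preserves identities and composition. The key book-keeping device is that $n/Inj$ has a full subcategory equivalent to $Inj$ whose objects are the canonical coproduct injections $\iota_1:n\hookrightarrow n+k$ for $k\in\Nat$ and whose morphisms $n+k\to n+k'$ are those of the form $id_n+h$ for $h:k\hookrightarrow k'$ in $Inj$. Since this subcategory hits every isomorphism class, the colimit defining $(P_{ff}H)(n)$ may be computed over it; this is what the phrase ``up to coherent isomorphism, uniquely of the form $n+k$'' refers to, and it lets us represent every element of $(P_{ff}H)(n)$ by a pair $(n+k,\xi)$ with $\xi\in P_fP_fH(n+k)$.

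For (i), I would define $(P_{ff}H)(f)$ on such a representative by $(n+k,P_fP_fH(f+k)(\xi))$; compatibility with the coend relation for a morphism $id_{n'}+h:n'+k\to n'+k'$ reduces to the strict commutativity of $(f+k')\circ(id_n+h)=(id_{n'}+h)\circ(f+k)$ in $\ls$, which holds by the universal property of the coproduct. Functoriality of $P_{ff}H$ on composites is then immediate from $(g\circ f)+k=(g+k)\circ(f+k)$ together with functoriality of $P_fP_fH$ pointwise at each $n+k$.

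For (ii), set $(P_{ff}\alpha)_n(n+k,\xi)=(n+k,P_fP_f(\alpha_{n+k})(\xi))$. Lax naturality at a general $f:n\to n'$ in $\ls$ unfolds, on a representative $(n'+k,\xi)$, to the inequality $P_fP_f(K(f+k)\circ\alpha_{n'+k})(\xi)\leq P_fP_f(\alpha_{n+k}\circ H(f+k))(\xi)$ in $P_fP_fK(n+k)$, which follows directly from lax naturality of $\alpha$ at the single map $f+k$ combined with monotonicity of $P_fP_f$. Strict respect for injections then transports unchanged from $\alpha$ to $P_{ff}\alpha$: when $f$ is an injection so is $f+k$, and the inequality collapses to equality by hypothesis. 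Claim (iii) is then a pointwise check using functoriality of $P_fP_f$ at each component $\alpha_{n+k}$.

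The main obstacle, and genuinely the only non-routine step, is showing that $(P_{ff}\alpha)_n$ is \emph{well-defined on coend equivalence classes}. The generating morphisms of the indexing subcategory are the injections $id_n+h$, and well-definedness demands that $\alpha$ commute \emph{strictly} with their images $J(id_n+h)$ in $\ls^{op}$, not merely laxly. Without this, the lax inequality between $\alpha_{n+k'}\circ H(J(id_n+h))$ and $K(J(id_n+h))\circ\alpha_{n+k}$ would fail to collapse to the equality required for the colimit identification, and $(P_{ff}\alpha)_n$ would not descend to a map $(P_{ff}H)(n)\to(P_{ff}K)(n)$. This is precisely the role of Definition~\ref{def:laxinj}: once strict respect for injections is enforced, nothing else in the proof presents any real difficulty.
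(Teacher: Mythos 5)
Your proposal is correct and follows essentially the route the paper has in mind when it declares the proof ``routine but lengthy'': computing the colimits via the canonical representatives $n+k$, using bifunctoriality of $+$ in $\ls$ for functoriality on maps $f:n\to n'$, and checking everything pointwise. In particular, your identification of the only delicate point---that well-definedness of $P_{ff}(\alpha)$ on the colimit requires $\alpha$ to commute strictly with injections---is exactly the consideration that motivates the paper's restriction to $Lax_{Inj}(\ls^{op},Poset)$ in Definition~\ref{def:laxinj}.
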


The proof is routine but requires lengthy calculation involving colimits. Observe that we have not required countability anywhere 
in the definition of $P_{ff}$, using only finiteness as we sought
at the end of Section~\ref{sec:recall}.

We can now model an arbitrary logic program by a map $p:At\longrightarrow P_{ff}At$ in $Lax_{Inj}(\ls^{op},Poset)$, 
modelling ListNat as we did in Example~\ref{ex:listnat2} but now modelling the clauses of GC directly rather than
using the awkward substitution instances of Example~\ref{ex:lp2}.

\begin{example}\label{ex:listnat3}
Except for the restriction of $Lax(\ls^{op},Poset)$ to $Lax_{Inj}(\ls^{op},Poset)$, ListNat is modelled in exactly the same
way here as it was in Example~\ref{ex:listnat2}, the reason being that no clause in ListNat has a variable in the tail that
does not already appear in the head. We need only observe that, although $p$ is not strictly natural in general, it
does strictly respect injections. For example, if one views $\mathtt{list(cons( x, 0))}$ as an element of $At(2)$, its image
under $p_2$ agrees with its image under $p_1$.
\end{example}

\begin{example}\label{ex:lp3}
In contrast to Example~\ref{ex:lp2}, using $P_{ff}$, we can emulate the construction of Examples~\ref{ex:listnat2} and~\ref{ex:listnat3} 
for ListNat to
model GC. 

Modulo possible renaming of variables, $\mathtt{connected(x,y)}$ is an element of $At(2)$. The function $p_2$ sends
it to the element $\{ \{ \mathtt{edge(x,z)},\mathtt{connected(z,y)}\}\}$ of $(P_{ff}(At))(2)$. This is possible by taking $n=2$
and $m=3$ in the formula for $P_{ff}(At)$ in  Proposition~\ref{prop:ff}. In contrast, $\{ \{ \mathtt{edge(x,z)},\mathtt{connected(z,y)}\}\}$
 is not an element of $P_fP_fAt(2)$, hence
the failure of Example~\ref{ex:lp2}.

The behaviour of $P_{ff}(At)$ on maps ensures that the 
lax transformation $p$ strictly respects injections. For example, if  $\mathtt{connected(x,y)}$ is seen as an element of $At(3)$, 
the additional variable is treated as a fresh variable $w$, so does not affect the image of  $\mathtt{connected(x,y)}$ under $p_3$.
\end{example}

\begin{theorem}
The functor $P_{ff}:Lax_{Inj}(\ls^{op},Poset)\longrightarrow Lax_{Inj}(\ls^{op},Poset)$ induces a cofree
comonad $C(P_{ff})$ on  $Lax_{Inj}(\ls^{op},Poset)$. Moreover, given a logic progam $P$ qua $P_{ff}$-coalgebra $p:At\longrightarrow P_{ff}(At)$,
the corresponding $C(P_{ff})$-coalgebra $\overline{p}:At\longrightarrow C(P_{ff})(At)$ sends an atom $A(x_1,\ldots ,x_n)\in At(n)$ to
the coinductive tree for $A(x_1,\ldots ,x_n)$.
\end{theorem}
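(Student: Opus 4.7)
The plan is to adapt the proof of Theorem~\ref{main} and hence Corollary~\ref{oldcor} from $Lax(\ls^{op},Poset)$ to the restricted setting $Lax_{Inj}(\ls^{op},Poset)$ of Definition~\ref{def:laxinj}, and then to replay the limit-based argument of Theorem~\ref{constr:Gcoalg} componentwise. First I would verify that $Lax_{Inj}(\ls^{op},Poset)$ is a locally ordered category whose order is inherited pointwise from $Poset$, and that the endofunctor $P_{ff}$ of Proposition~\ref{prop:ff} is locally ordered. Strict preservation of injections by $P_{ff}(\alpha)$ follows directly from the coend formula $\int^{m\in n/Inj} P_fP_f H J(m)$: an injection $i:n\to n'$ corresponds under $J$ to a reindexing of this coend diagram that respects the strict-injection presentation of lax transformations.

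Next, to construct $C(P_{ff})$, I would form the inverse limit
\[
\ldots \longrightarrow H \times P_{ff}(H \times P_{ff}(H)) \longrightarrow H \times P_{ff}(H) \longrightarrow H
\]
pointwise in $Lax_{Inj}(\ls^{op},Poset)$ for each $H$, with transition maps analogous to those in Theorem~\ref{constr:Gcoalg}. The key point is that this limit lies in $Lax_{Inj}$: limits in $Lax_{Inj}$ are computed componentwise in $Poset$, and the strict-injection condition is closed under pointwise limits. The comonad laws and the bijection between $P_{ff}$-coalgebras and $C(P_{ff})$-coalgebras follow by the same algebraic manipulation as in the proof of Theorem~\ref{main}, with the additional check that each lax square involved commutes strictly when restricted to injections.

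For the second assertion, I would fix $n$ and $A(x_1,\ldots,x_n)\in At(n)$. By the limit construction, $\overline{p}_n(A(x_1,\ldots,x_n))$ is the cone whose $k$-th component records, inductively, the root $A$, the clause-heads and or-node branching given by $p_n(A)$, and then the antecedents of each such clause with $p_n$ reapplied. By Proposition~\ref{prop:ff}, $p_n(A(x_1,\ldots,x_n))$ now \emph{directly} lists, without the substitution dodge of Example~\ref{ex:lp2}, each clause whose mgm-instance has head $A(x_1,\ldots,x_n)$, with fresh variables from the tail introduced via the coend over $n/Inj$. Level-by-level comparison with Definition~\ref{def:cointree} then shows that the $k$-th cone component of $\overline{p}_n(A(x_1,\ldots,x_n))$ exhibits the first $2k+1$ levels of the coinductive tree for $A(x_1,\ldots,x_n)$, exactly as in Theorem~\ref{constr:Gcoalg} and Example~\ref{ex:free}; the whole tree is recovered by coinduction.

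The main obstacle, I expect, is bookkeeping around the coend in Proposition~\ref{prop:ff}: one must confirm that the strict-injection condition interacts correctly with the fresh-variable introduction in $P_{ff}(At)$, so that composing $\overline{p}$ with an injection $i:n\to m$ produces the same tree one would obtain by first passing to $\overline{p}_m$ and then reading off the $At(m)$-component. This is what makes Examples~\ref{ex:listnat3} and~\ref{ex:lp3} honest, and it is also the step at which the proof of Theorem~\ref{main} needs the most care to transfer from $Lax$ to $Lax_{Inj}$.
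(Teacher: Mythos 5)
Your proposal has the same overall shape as the paper's proof: the comonad is built arity-by-arity by the same limit as in Theorem~\ref{constr:Gcoalg} with $P_fP_f$ replaced by $P_{ff}$, strict naturality along injections is checked for the construction, and $\overline{p}$ is identified with the coinductive-tree map by a pointwise, level-by-level coinductive comparison, which now works for GC because $p_n$ records existential clauses directly via the coend of Proposition~\ref{prop:ff} rather than via the substitution instances of Example~\ref{ex:lp2}. The one step you should repair is the opening appeal to Theorem~\ref{main} and Corollary~\ref{oldcor}: that theorem concerns endofunctors of the form $Lax(\ls^{op},E)$, i.e.\ post-composition with a locally ordered endofunctor $E$ on $Poset$, and $P_{ff}$ is precisely \emph{not} of that form, since $(P_{ff}H)(n)$ mixes the values $H(m)$ for all $m\geq n$; this is the very reason Section~\ref{sec:derivation} needs a new argument, so ``the same algebraic manipulation as in the proof of Theorem~\ref{main}'' is not available verbatim. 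The paper instead secures existence of $C(P_{ff})$ by noting that, because $P_{ff}$ strictly respects injections, it restricts to an accessible endofunctor of the locally finitely presentable category $[Inj,Poset]$, where the cofree comonad exists and is computed pointwise by the limit; functoriality (laxly) with respect to arbitrary maps of $\ls$ and the universal property in $Lax_{Inj}(\ls^{op},Poset)$ are then verified directly, the ``routine, albeit tedious'' step that corresponds to your final bookkeeping paragraph. Your claim that the relevant limits in $Lax_{Inj}(\ls^{op},Poset)$ are computed pointwise is fine for the diagrams at hand (strict connecting maps, componentwise order), so once the reliance on Theorem~\ref{main} is replaced by the accessibility argument, or by a direct verification of the universal property, your argument coincides with the paper's.
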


\begin{proof}
The construction of Theorem~\ref{constr:Gcoalg}, subject to mild
rephrasing, continues to work here. Specifically, $(C(P_{ff})At)(n)$
is given by the same limit as in Theorem~\ref{constr:Gcoalg} but with
$At$ replaced by $At(n)$ and with $P_fP_f$ replaced by $P_{ff}$:
products in the category $Lax_{Inj}(\ls^{op},Poset)$ are given
pointwise, so the use of projections is the same; $[Inj,Poset]$ is
locally finitely presentable and $P_{ff}$ is an accessible functor,
allowing us to extend the construction of the cofree comonad pointwise
to $[Inj,Poset]$.  It is routine, albeit tedious, to verify
functoriality of $C(P_{ff})$ with respect to all maps and to verify the
universal property. The construction of $\overline{p}$ is given
pointwise, with it following from its coinductive construction that it
yields the coinductive trees as required.
\end{proof}

 The lax naturality in respect to general maps $f:m\longrightarrow n$  means that a substitution applied
to an atom $A(x_1,\ldots ,x_n)\in At(n)$, i.e., application of the function $At(f)$ to $A(x_1,\ldots ,x_n)$, followed by
application of $\overline{p}$, i.e., taking the coinductive tree for the substituted atom, or application
of the function $(C(P_ff)At)f)$ to the coinductive tree for $A(x_1,\ldots ,x_n)$  potentially yield different
   trees: the former substitutes into $A(x_1,\ldots ,x_n)$, then takes its coinductive tree, while the latter applies
a substitution to each node of the coinductive tree for $A(x_1,\ldots ,x_n)$, then prunes to remove redundant branches.

\begin{example}\label{ex:lp4}
Extending Example~\ref{ex:lp3}, consider $\mathtt{connected(x,y)}\in At(2)$. In expressing GC as a map $p:At\longrightarrow P_{ff}At$
in Example~\ref{ex:lp3}, we put 
\[
p_2(\mathtt{connected(x,y)}) = \{ \{ \mathtt{edge(x,z)},\mathtt{connected(z,y)}\}\}
\] 
Accordingly, $\overline{p}_2(\mathtt{connected(x,y)})$ is the coinductive tree for $\mathtt{connected(x,y)}$, thus the infinite tree generated by repeated application of the same clause modulo renaming of variables.

If we substitute $x$ for $y$ in the coinductive tree, i.e., apply the function $(C(P_{ff})At)(x,x)$ to it (see the definition of $L_{\Sigma}$ at the start
of Section~\ref{sec:recall} and observe that $(x,x)$ is a $2$-tuple of terms
generated trivially by the variable $x$), we obtain the same tree but with $y$ systematically replaced by $x$. However, if we substitute $x$ for $y$
in $\mathtt{connected(x,y)}$, i.e., apply the function $At(x,x)$ to it, we obtain $\mathtt{connected(x,x)}\in At(1)$, whose
coinductive tree has additional branching as the first clause of GC, i.e., $\mathtt{connected(x,x)}\gets \,$ may also be applied.

In contrast to this, we have strict naturality with respect to injections: for example, an injection $i:2\longrightarrow 3$ yields the
function $At(i):At(2)\longrightarrow At(3)$ that, modulo renaming of variables, sends $\mathtt{connected(x,y)}\in At(2)$ to itself
seen as an element of $At(3)$, and the coinductive tree for $\mathtt{connected(x,y)}$ is accordingly also  sent by $(C(P_{ff})At)(i)$ to
itself seen as an element of $(C(P_{ff})At)(3)$.
\end{example}

Example~\ref{ex:lp4} illustrates why, although the condition of strict naturality with respect to injections holds for $P_{ff}$, it does not
hold for $Lax(\ls^{op},P_fP_f)$ in Example~\ref{ex:lp2} as we did not model the clause 
\[
 \mathtt{connected(x,y)}  \gets  \mathtt{edge(x,z)},
 \mathtt{connected(z,y)}
\]
directly there, but rather modelled all substitution instances into all available variables.

\section{Complementing saturated semantics}\label{sec:sat}

Bonchi and Zanasi's approach to modelling logic programming in~\cite{BZ} was to consider $P_fP_f$ as we
did in~\cite{KoP}, sending $At$ to $P_fP_fAt$, but to ignore the inherent laxness, replacing
$Lax(\ls^{op},Poset)$ by $[ob(\ls),Set]$, where $ob(\ls)$ is the set of objects of $\ls$ treated as a discrete
category, i.e., as one with only identity maps.

The central mathematical fact that supports saturated semantics is that, 
regarding $ob(\ls)$ as a discrete category, with inclusion functor $I:ob(\ls)\longrightarrow \ls$, the functor
\[
[I,Set]:[\ls^{op},Set]\longrightarrow [ob(\ls)^{op},Set]
\]
that sends a functor $H:\ls^{op}\longrightarrow Set$ to the composite functor $HI:ob(\ls) = ob(\ls)^{op} \longrightarrow Set$
has a right adjoint. That adjoint is given by right Kan extension. It is primarily the fact of the existence of the right adjoint, rather than its
characterisation as a right Kan extension, that enabled Bonchi and Zanasi's various constructions, in particular those
of saturation and desaturation. 

That allows us to mimic Bonchi and Zanasi's saturation semantics, but starting from $Lax(\ls^{op},Poset)$  rather than from $[ob(\ls),Set]$. We
are keen to allow this as laxness is an inherent fact of the situation, as we have explained through the course of this
paper. Such laxness has been valuable in related semantic endeavours,
such as in Tony Hoare's pioneering work on the modelling of data refinement~\cite{HH,HH1,KP1}, of which substitution in logic programming
can be seen as an instance. 

The argument, which was originally due to Ross Street, cf~\cite{S}, goes as follows.

\begin{theorem}~\cite{BKP}\label{thm:BKP} For any finitary $2$-monad $T$ on a cocomplete $2$-category $K$, the inclusion
\[
J:T\mbox{-}Alg_s\longrightarrow T\mbox{-}Alg_l
\]
of the category of strict $T$-algebras and strict maps of $T$-algebras into the category of strict $T$-algebras and lax
maps of $T$-algebras has a left adjoint.
\end{theorem}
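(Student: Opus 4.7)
The plan is to construct an explicit left adjoint $Q : T\text{-}Alg_l \longrightarrow T\text{-}Alg_s$ to the inclusion $J$ by means of a $2$-categorical colimit construction carried out inside $T\text{-}Alg_s$ itself. Recall that a lax map $(f,\bar f) : (A,a)\longrightarrow (B,b)$ consists of a $1$-cell $f: A\longrightarrow B$ in $K$ together with a $2$-cell $\bar f : b\cdot Tf \Rightarrow f\cdot a$ subject to the two coherence axioms with respect to the unit $\eta$ and multiplication $\mu$ of $T$. The aim is to absorb this extra $2$-cell datum into a genuine strict algebra $QA$, so that strict maps out of $QA$ classify lax maps out of $(A,a)$.

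The first step is to verify that $T\text{-}Alg_s$ admits enough $2$-colimits to perform the construction. Since $T$ is a finitary $2$-monad on a cocomplete $2$-category $K$, the forgetful $2$-functor $T\text{-}Alg_s\longrightarrow K$ is $2$-monadic; it creates filtered colimits and reflexive coinserters, and, combined with the coinserters, coequifiers and ordinary colimits existing in $K$, this lifts to give all the flexible colimits one needs in $T\text{-}Alg_s$.

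The second step is the construction of $Q(A,a)$. Starting from the free strict algebra $(TA,\mu_A)$, I would first form, in $T\text{-}Alg_s$, the coinserter of the parallel pair of strict maps $Ta, \mu_A : (T^2A,\mu_{TA}) \rightrightarrows (TA,\mu_A)$, obtaining a strict algebra $A^\ast$ together with a strict map $q:(TA,\mu_A)\longrightarrow A^\ast$ and a universal $2$-cell $\gamma : q\cdot \mu_A \Rightarrow q\cdot Ta$. I would then coequify the two $2$-cells built from $\gamma$ that express, respectively, the $\eta$-unit axiom and the $\mu$-associativity axiom of a lax algebra map, yielding a strict algebra $QA=(A^\dagger,a^\dagger)$ with a canonical strict map $TA\longrightarrow A^\dagger$ and a canonical $1$-cell $i_A : A\longrightarrow A^\dagger$ in $K$ obtained by precomposing with $\eta_A$.

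The third step is to verify the universal property, namely that precomposition with $(i_A,\bar\iota)$ induces a $2$-natural isomorphism $T\text{-}Alg_s(QA,(B,b)) \cong T\text{-}Alg_l((A,a),(B,b))$. A strict map $g : QA\longrightarrow (B,b)$ restricts via $i_A$ to a $1$-cell $f:A\longrightarrow B$ and via the universal $2$-cell to a $2$-cell $\bar f$; the coinserter universal property produces the $2$-cell, and the coequifier relations are precisely the lax-map coherence axioms, so this assignment is a bijection. In the other direction, any lax map determines, by freeness and the universal properties of coinserter and coequifier, a unique strict extension. Naturality in $(B,b)$ is inherited from the naturality of the defining colimits.

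The main obstacle will be the careful bookkeeping in the second step: writing down precisely which pair of $2$-cells must be coequified so that the resulting universal property matches the lax-map coherence equations on the nose, and checking that the construction is $2$-functorial in $(A,a)$. This is standard for the BKP machinery but not short, and it is the place where one uses both the finitariness of $T$ (to guarantee accessibility of the construction) and the cocompleteness of $K$ (to ensure that the coinserters and coequifiers exist in $T\text{-}Alg_s$, via monadicity).
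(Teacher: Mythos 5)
Your construction is correct and is essentially the intended argument: the paper does not prove Theorem~\ref{thm:BKP} itself but quotes it from~\cite{BKP}, where the left adjoint is obtained exactly as you describe, as a lax-morphism classifier built from the coinserter of $Ta,\mu_A\colon (T^2A,\mu_{TA})\rightrightarrows (TA,\mu_A)$ in $T\mbox{-}Alg_s$ followed by coequifiers imposing the unit and multiplication coherence axioms, with finitariness of $T$ and cocompleteness of $K$ guaranteeing that these colimits exist in $T\mbox{-}Alg_s$. The only refinement needed is bookkeeping: each coherence axiom is imposed by coequifying a \emph{pair} of $2$-cells built from $\gamma$ (one against an identity), after which the stated bijection between strict maps out of $QA$ and lax maps out of $(A,a)$, and its $2$-naturality, go through as you indicate.
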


\begin{example}\label{left} For any Lawvere theory $L$, there is a finitary locally ordered monad $T$ on $[ob(L),Poset^{op}]$ for 
which $[L,Poset^{op}]$ is isomorphic
to $T$-$Alg_s$, with $T$-$Alg_l$  isomorphic to $Lax(L,Poset^{op})$. The monad $T$ is given by the composite of the functor
\[
[J,Poset^{op}]:[L,Poset^{op}]\longrightarrow [ob(L),Poset^{op}]
\]
where $J:ob(L)\longrightarrow L$ is the inclusion, cf Bonchi and Zanasi's construction~\cite{BZ}, with its left adjoint, which is given
by left Kan extension. The fact that 
the functor $[J,Poset^{op}]$ also has a right adjoint, given by right Kan extension,
implies that the monad $T$ is finitary.
\end{example}

\begin{corollary}
For any Lawvere theory $L$, the inclusion 
\[
[L^{op},Poset]\longrightarrow Lax(L^{op},Poset)
\]
has a right adjoint.
\end{corollary}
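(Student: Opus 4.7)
The plan is to deduce the corollary from Theorem~\ref{thm:BKP} by applying Example~\ref{left} with $L$ replaced by $L^{op}$, and then transporting the resulting left adjoint across a 2-categorical duality to obtain a right adjoint.

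First, instantiating Example~\ref{left} at $L^{op}$ yields a finitary locally ordered 2-monad $T$ on $[ob(L),Poset^{op}]$ for which $T$-$Alg_s$ is isomorphic to $[L^{op},Poset^{op}]$ and $T$-$Alg_l$ is isomorphic to $Lax(L^{op},Poset^{op})$. Theorem~\ref{thm:BKP} then immediately supplies a left adjoint to the canonical inclusion
\[
[L^{op},Poset^{op}]\longrightarrow Lax(L^{op},Poset^{op}).
\]

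Second, I would convert this adjunction into the required right adjoint by means of the duality that relates $Poset^{op}$ (in the sense used by Example~\ref{left}) to $Poset$. Concretely, an order-preserving map into $Poset^{op}$ is the same datum as an order-preserving map in the reverse direction in $Poset$, and this identification extends functorially to the whole of the adjoint situation above, modulo passing to the opposite 1-category so as to accommodate the reversal of direction of transformations. Since taking opposites of 1-categories exchanges left and right adjoints, the left adjoint produced by Theorem~\ref{thm:BKP} becomes precisely a right adjoint to the inclusion $[L^{op},Poset]\hookrightarrow Lax(L^{op},Poset)$, which is the content of the corollary.

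The main obstacle, such as it is, lies entirely in the bookkeeping of the second step: one must verify that the composite effect of the several opposites in play---those of $L$, of $Poset$, and of the ambient 1-category carrying the adjunction---produces the corollary's statement exactly, rather than a close variant in which, for example, lax transformations are replaced by oplax ones. Once this is checked, no further input beyond Theorem~\ref{thm:BKP} and Example~\ref{left} is required.
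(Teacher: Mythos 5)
There is a genuine gap, and it sits exactly where you locate the ``bookkeeping'': the duality in your second step does not do what you need. The only canonical identification available is $[A,B]^{op}\cong[A^{op},B^{op}]$ (there is no canonical isomorphism between $Poset$ and $Poset^{op}$ that would let you flip the codomain alone), so passing to opposite 1-categories necessarily dualises the domain as well. Starting, as you do, from the left adjoint to $[L^{op},Poset^{op}]\longrightarrow Lax(L^{op},Poset^{op})$, taking opposites yields a right adjoint to an inclusion of the form $[L,Poset]\longrightarrow Lax(L,Poset)$ (up to the lax/oplax check you flag), \emph{not} to $[L^{op},Poset]\longrightarrow Lax(L^{op},Poset)$: your initial substitution of $L^{op}$ for $L$ is undone by the very duality you then apply, and the conclusion you claim does not follow as stated. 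The repair is to apply Example~\ref{left} to $L$ itself, which is precisely the paper's proof: Theorem~\ref{thm:BKP} gives a left adjoint to $[L,Poset^{op}]\longrightarrow Lax(L,Poset^{op})$, and then the canonical isomorphisms $[L,Poset^{op}]^{op}\cong[L^{op},Poset]$ and $Lax(L,Poset^{op})^{op}\cong Lax(L^{op},Poset)$, together with the fact that a functor has a right adjoint iff its opposite has a left adjoint, give the corollary. (Alternatively you could argue that nothing in Example~\ref{left} uses the Lawvere structure, so the statement holds for arbitrary small locally ordered categories and may then be instantiated at $L^{op}$; but you would have to say this explicitly, since $L^{op}$ is not itself a Lawvere theory, and at that point you have simply rederived the paper's route.)

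Two smaller omissions: you invoke Theorem~\ref{thm:BKP} ``immediately'' without verifying its hypothesis that the base 2-category is cocomplete---the paper checks that $Poset$ is complete, hence $Poset^{op}$ and therefore $[ob(L),Poset^{op}]$ are cocomplete---and the lax-versus-oplax verification you defer is a real part of the argument (the paper asserts the two canonical isomorphisms doing that work), so it should be carried out rather than only acknowledged.
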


\begin{proof} $Poset$ is a complete $2$-category as it is a complete locally ordered category. So $Poset^{op}$ is a cocomplete $2$-category,
and so $[ob(L),Poset^{op}]$ is a cocomplete $2$-category. So the conditions of Theorem~\ref{thm:BKP} hold for Example~\ref{left}, and
so the inclusion
\[
[L,Poset^{op}]\longrightarrow Lax(L,Poset^{op})
\]
has a left adjoint. But $[L,Poset^{op}]^{op}$ is canonically isomorphic to $[L^{op},Poset]$, and  $Lax(L,Poset^{op})^{op}$ is
canonically isomorphic to $Lax(L^{op},Poset)$, and in general, a functor $H:A\longrightarrow B$ has a right adjoint if and only
if $H:A^{op}\longrightarrow B^{op}$ has a left adjoint. The combination of these facts yields the result.
\end{proof}

With this result in hand, one can systematically work through Bonchi
and Zanasi's paper, adapting their constructions for saturation and
desaturation, without discarding the inherent laxness that logic
programming, cf data refinement, possesses.  

We have stated the results here for arbitrary lax transformations, but
they apply equally to those that strictly respect injections, i.e., a
subtle extension of the above argument shows that the inclusion
\[
[L^{op},Poset]\longrightarrow Lax_{Inj}(L^{op},Poset)
\]
has a right adjoint, that right adjoint being a further variant of the
right Kan extension that Bonchi and Zanasi used. The argument for
lax naturality from the Introduction retains its force, so in Bonchi
and Zanasi's sense, this does not yield compositionality of lax
semantics, but it does further refine their analysis of saturation,
eliminating more double counting.

 \section{Conclusions}\label{sec:concl}
For variable-free logic programs, in~\cite{KMP}, we used the cofree comonad on $P_fP_f$ to model the
coinductive trees generated by a logic program. The notion of coinductive tree had not been isolated
at the time of writing of~\cite{KMP}, or of~\cite{KoP}, so we did not explicitly explain the relationship
in~\cite{KMP}, hence our doing so here, but the result was effectively in~\cite{KMP}, just explained in somewhat
different terms.

Using lax transformations, we extended the result in~\cite{KoP}, albeit again not stating it explicitly but again explained explicitly
here, to arbitrary logic programs, including existential programs 
a leading example being GC, as studied
extensively by Sterling and Shapiro~\cite{SS}.
	The problem of existential clauses is well-known in the literature on theorem proving and within communities that use term-rewriting, TM-resolution or their variants.
In TRS~\cite{Terese}, existential variables are not allowed to appear in rewriting rules, and in type inference, the restriction to non-existential programs is common~\cite{Jones97}. In LP, the problem of handling existential variables when constructing proofs with TM-resolution marks the boundary between the theorem-proving and problem-solving aspects, as explained in Section~\ref{sec:backr}.

The papers~\cite{KoP,KoPS} also contained a kind of category theoretic semantics for 
existential logic programs
such as GC, but that semantics was limited, not modelling the coinductive trees generated by TM-resolution for such logic programs.
Here, we have refined lax semantics, refining $Lax(\ls^{op},Poset)$ to $Lax_{Inj}(\ls^{op},Poset)$, thus  insisting upon
strict naturality for injections, and refining the construction $P_cP_fAt$ to $P_{ff}(At)$, thus allowing for additional variables in the
tail of a clause in a logic program and not introducing countability, cf the modelling of local state in~\cite{PP2}. This has allowed us to model coinductive
trees for arbitrary logic programs.

We have further mildly refined Bonchi and Zanasi's saturation semantics for logic programming~\cite{BZ}, showing how it may be seen to  complement
rather than to replace lax semantics. 



 \bibliographystyle{abbrv}


\end{document}